\newtheorem{theorem}{Theorem}[section]
\newtheorem{corollary}[theorem]{Corollary}
\newtheorem{proof}[theorem]{Proof}
\newcommand{\algorithmicinit}{\textbf{Initialization}}
\newcommand{\INIT}{\item[\algorithmicinit]}
\newcommand{\algorithmicinput}{\textbf{Input }}
\newcommand{\INPUT}{\item[\algorithmicinput]}
\newcommand{\algorithmiciter}{\textbf{Iterative 3D analysis }}
\newcommand{\ANALYSIS}{\item[\algorithmiciter]}
\begin{document}

\title{Hierarchical Coupled Geometry Analysis for Neuronal Structure and Activity Pattern Discovery}
\date{} 

\author{Gal~Mishne, Ronen~Talmon, Ron~Meir, Jackie~Schiller, Uri~Dubin and Ronald~R.~Coifman
\thanks{G. Mishne, R. Talmon and R. Meir are with the Department of Electrical Engineering, Technion - Israel Institute of Technology, Haifa 32000, Israel. (e-mail: galga@tx.technion.ac.il; ronen@ee.technion.ac.il; meir@ee.technion.ac.il). J. Schiller and U. Dubin are with the Department of Physiology, Technion Medical School, Bat-Galim, Haifa 31096, Israel. (email: jackie@tx.technion.ac.il, uri.dubin@gmail.com). R. R. Coifman is with the Department of Mathematics, Yale University, New Haven, CT 06520 USA (e-mail:ronald.coifman@math.yale.edu; coifman@math.yale.edu).}
}

\maketitle

\begin{abstract}
In the wake of recent advances in experimental methods in neuroscience, the ability to record in-vivo neuronal activity from awake animals has become feasible.
The availability of such rich and detailed physiological measurements calls for the development of advanced data analysis tools, as commonly used techniques do not suffice to capture the spatio-temporal network complexity.
In this paper, we propose a new hierarchical coupled geometry analysis, which exploits the hidden connectivity structures between neurons and the dynamic patterns at multiple time-scales.
Our approach gives rise to the joint organization of neurons and dynamic patterns in data-driven hierarchical data structures.
These structures provide local to global data representations, from local partitioning of the data in flexible trees through a new multiscale metric to a global manifold embedding.
The application of our techniques to in-vivo neuronal recordings demonstrate the capability of extracting neuronal activity patterns and identifying temporal trends, associated with particular behavioral events and manipulations introduced in the experiments.
\end{abstract}

\begin{IEEEkeywords}
dimensionality reduction, diffusion maps, geometric analysis, manifold learning, neuronal data analysis
\end{IEEEkeywords}
\section{Introduction}
\label{sec:intro}
A fundamental goal in neuroscience is to understand how information is represented, stored and modified in cortical networks. 
New experimental methods in neuroscience not only enable chronic, minimally invasive, recordings of large populations of neurons with cellular level resolution, but also allow recordings from identified neuronal subtypes~\cite{Shepherd2013}. 
The ability to acquire complex large-scale detailed behavioral and neuronal datasets calls for the development of advanced data analysis tools, as commonly used techniques do not suffice to capture the spatio-temporal network complexity. 
Such a framework should deal effectively with the challenging characteristics of neuronal and behavioral data, namely connectivity structures between neurons and dynamic patterns at multiple time-scales.

Due to natural and physical constraints, the accessible high-dimensional data often exhibit geometric structures and lie on a low-dimensional manifold. 
Manifold learning is a class of data driven methods; these methods aim to find meaningful geometry-based non-linear representations that parametrize the manifold underlying the data~\cite{Tenenbaum:2000, Roweis:2000, Belkin2003, Donoho2003,Coifman2006}. 
Only very recently have we begun to witness seeds of its applicability to real biological data, and, in particular, to neuroscience (e.g.,~\cite{Vogelstein2014,Cunningham2014}). 
Yet, most existing manifold learning methods are unable to deal with the complex data sets arising in neuroscience, since they do not account for several fundamental characteristics of the structures and patterns underlying such data. 
First, current methods are sensitive to noise and interferences.
Second, to a large extent, they do not accommodate the dynamical patterns underlying the neuronal activity. 
Third, manifold learning does not take into account co-dependencies between neuronal connectivity structures and dynamical patterns.

Previous work has addressed analysis of data exhibiting such co-dependencies.
To exemplify the generality of such a problem, consider the Netflix prize~\cite{Bennett2007}, where it is desired to provide systematic suggestions and recommendations to viewers.
A co-organization enables to both group together viewers based on their similar tastes and, at the same time, group together movies based on their similar ratings across viewers.
This clustering of viewers or of movies can be highly dependent on the particular viewer, and on the particular movie; two viewers may be similar under one metric, since they both like similar adventure movies, but at the same time, quite different since they do not like the same comedies. 
Thus, the suggestion system needs different metrics for recommending different types of movies to different viewers. 

Data arising in such settings can be viewed as a 2D matrix, where in the Netflix Prize the first dimension is the viewers (observations) and the second is the movies (variables).  
The need for matrix co-organization arises when observations are not independent and identically distributed, i.e., correlations exist among both observations and variables of the data matrix. 
Similar settings also arise in analysis of documents, psychological questionnaires, gene expression data, etc., where there is no particular reason to prefer treating one dimension as independent, while the other is treated as dependent~\cite{Cheng2000, Tang2001, Busygin2008, Chi2014}.
To address problems of this sort, Gavish and Coifman~\cite{Gavish2012} proposed a methodology for matrix organization relying on the construction of a tree-based bi-organization of the data.

The analysis of natural data poses an even greater challenge, since such data may also depend on a massive number of marginally relevant variables, including distortions and unrelated measurements, requiring metrics, which are not sensitive to such variability, and which are capable of suppressing noise or irrelevant factors. 
In particular, it is insufficient to represent neuronal activity recordings, that were acquired in repetitive trials, as a 2D matrix; representing observations (time samples in one dimension), and variables (neurons in the other dimension), does not take into account the multiple scales of the time samples, since the time exhibits both local (within trial) and global (across trials) time scales.
Therefore, the data are viewed as a 3D database whose dimensions are the neurons, the local time frames and the global trial indices.

In this paper, to accommodate the three-dimensional nature of this data, we extend~\cite{Gavish2012} to a triple-geometry analysis obtaining a nonparametric model for data tensors.
We propose a completely data-driven analysis of a given rank-3 tensor that provides a co-organization of the data, i.e., each of the dimensions is organized so that it varies smoothly along the other two dimensions.
Specifically, we focus on trial-based neuronal data, however, our approach is general and can be used to analyze other types of 3D data-sets.

In addition to the challenge of organizing the data, applying manifold learning methods requires a ``good'' metric between points, which conveys local similarity, as in the Netflix example.
Regular metrics do not perform well due to the high dimensionality and hierarchical structure of the trial-based data, as well as their inability to encompass the 3D nature of the data.
For example, using the Euclidean distance or cosine similarity between two sensors, treats the neuronal recordings as a 1D vector, and does not take into account the combined local and global nature of the trial-based experiments.
Using more sophisticated metrics such as the Mahalanobis or PCA-based distances proposed in~\cite{Singer2008, Talmon2013, Talmon2014,Haddad2014,Mishne2014b}, requires a notion of locality, which is non-trivial in the given application, as the data do not necessarily follow a regular Euclidean 3D grid.
Therefore, we also address the problem of defining a new multiscale metric, that incorporates the coupling between the dimensions based on the hierarchical structure of the data in each dimension.

Broadly, our focus is on finding a good description of the data; our analysis enables us to build intrinsic data-driven multiscale hierarchical structures.
In particular, our analysis builds three types of data structures, conveying a local to global representation, from hierarchical clustering of the data to a multiscale metric to a global embedding.
These three structures are constructed in an iterative refinement procedure for each dimension, based on the other two dimensions.
Thus, we exploit the coupling between the dimensions.

At the micro-scale, we learn a multiscale organization of the data, so that each point is organized in a bottom-up hierarchical structure, using a partition tree.
Thus, each point belongs to a set of nested folders, where each folder defines a coarser and coarser sense of locality/neighborhood.

At the intermediate scale, the hierarchical organization of the data is then used to define a novel 2D multiscale metric between points.
This metric enables to organize each dimension based on a coarse-to-fine decomposition of the other two dimensions.
Thus, the metric respects the hierarchy and compares points not only based on the raw measurements, but also based on their values across scales.
It is based on a mathematical foundation, stemming from the approximation of the earth-mover's distance (EMD) proposed by Leeb~\cite{Leeb2015}.
We show that this metric is equivalent to the $l_1$ distance between points after applying a data-adaptive filter-bank.
We extend the tree-based metric to a bi-tree multiscale metric and corresponding 2D filter bank.

At the macro scale, the local organization of the data and the multiscale metric enable the calculation of a global manifold representation of the data, via the construction of an affinity kernel and its eigendecomposition~\cite{Coifman2006}.
This representation can then be used to provided a single smooth organization of each dimension.
The data can also be clustered based on this representation into meaningful groups.

Our tri-geometry approach is applied to neuronal recordings and is used for exploratory analysis, interpretability and visualization of the data.
This organization is needed to identify latent variables that control the activity in the brain and to develop the automated infrastructure necessary to recover complex structures, with less external information and without expert guidance.
Our experimental results on neuronal recordings of head-fixed mice demonstrate the capability of isolating and filtering regional activities and relating them to specific stimuli and physical actions, and of automatically extracting pathological dysfunction.
Specifically neuronal groupings, temporal activity groupings and experimental condition scenarios are simultaneously extracted from the database, in a data-driven, model-free and network-oriented manner.

The remainder of the paper is organized as follows. 
Section~\ref{sec:related} briefly reviews related work regarding state-of-the-art methods in neuroscience data analysis.
In Section~\ref{sec:problem}, we formulate the problem. 
In Section~\ref{sec:quest3d}, the proposed methodology for tri-organization of trial-based data is presented, detailing the three components of our approach.
Section~\ref{sec:results} presents analysis of experimental neuronal data, in a motor foewpaw reach task in mice.

\section{Related Work}
\label{sec:related}
Current network analysis approaches in neuroscience can be divided into two main classes~\cite{Friston2013,Sporns2014}. 
The first class comprises methods, which aim to determine functional connectivity, defined in terms of statistical dependencies between measured elements (e.g., neurons or networks), by constructing direct statistical models from the data (e.g., Granger causality, transfer entropy, point process modeling and graph based methods~\cite{DinCheBre06,Schreiber00,TruEdeFel05,Sporns2014}. 
The second class of methods is often based on Latent Dynamical Systems (LDS), which accommodates effective connectivity characterizing the causal relations between elements through an underlying hidden dynamical system \cite{Friston2013,Shenoy2013,Archer2014}.
Non-linear and non-Gaussian extensions of the Kalman filter, contemporary sequential Monte Carlo methods and particle filters, have also been introduced in neuroscience~\cite{Wu04,Ahmadian2011}. 

\begin{figure*}[t!]
\centering{\includegraphics[width=0.88\linewidth]{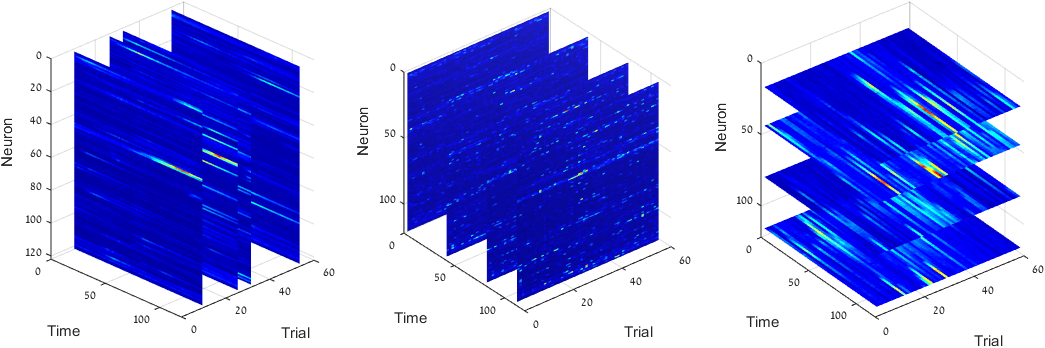}}
\caption{Visualization of 3D database.  The data is visualized here as 2D slices from multiple viewpoints: for several trials $\mathbf{X}_{\cdot \cdot T}$ (left), time frames $\mathbf{X}_{\cdot t \cdot}$ (center), and neurons $\mathbf{X}_{r \cdot \cdot}$(right).
The neuronal activity is represented by the intensity level of the image (blue – no activity, red – high activity). }
\label{fig:data3d_slices}
\end{figure*}
These methods share significant drawbacks, as they are mostly heuristic, providing only an approximation of a largely unknown system, and their quality is often hard to assess~\cite{Cunningham2014}. 
More importantly, they are all prone to the ``curse of dimensionality". 
On the one hand, designing a parametric generative model for truly complex high-dimensional data, such as neuronal/behavioral recordings, requires considerable flexibility, resulting in a model with a large number of tunable parameters. 
On the other hand, estimating a large number of parameters, and fitting a predefined class of dynamical models to high-dimensional data, is practically infeasible, thereby leading to poor data representations.
Our approach is better designed to deal with dynamical systems and aims to alleviate the shortcomings present in current analysis methods.
The proposed framework deviates from common recently used in neuroscience as it makes only very general smoothness assumptions, rather than postulating a-priori specific structural models.
In addition, we show that it takes into consideration the high dimensional spatio-temporal neuronal network structure. 

\section{Problem Formulation}
\label{sec:problem}
In the sequel we denote the three axes of the 3D data with a trial-based experiment in mind. 
However, our methodology can be applied to general 3D coordinates.
Consider data acquired in a set of fixed-length trials, composed of measurements from a large number of sensors (specifically neurons). 
Mathematically, we have a database $\mathbf{X}[r,t,T]$, depending on three variables. 
We collect at each neuron, or identified region of interest (ROI), denoted by $r$, a time series of the neuronal activity (e.g., fluorescence intensity levels in identified ROIs along time). 
In general trial-based data, this dimension corresponds to the sensors that acquire the data, such as in EEG~\cite{Talmon2015}.
On a short time scale, denoted by $t$, these time series can be viewed as a dynamic window profiling the neuron. 
These profiles vary on a long time scale as well, characterized by repetitive trials and denoted by $T$, and should be organized according to global trends and similarity between trials that are not necessarily consecutive. 
This database can be separately organized into a triple of geometries involving each variable, $r$, $t$, and $T$. 
However, the \emph{joint} organization of all three variables leads to an organization of dynamic neuronal activity regimes, using a global representation via the diffusion maps embedding~\cite{Coifman2006}. 

Let $\mathbf{X} \in \mathbb{R}^{n_r \times n_t \times n_T}$ be a rank-3 tensor, where $n_r$ is the number of neurons, $n_t$ is the number of time frames in an individual trial and $n_T$ is the number of trials.
A point $x[r,t,T]\in \mathbf{X}$ is indexed by the neuron $r$, the fast (short scale) time index $t$, and the trial (long scale time) index $T$. 
Note that although both $t$ and $T$ are coupled as indicating time, there is no assumption on a connection between the two.
Let $\mathbf{X}_{r\cdot\cdot} \in  \mathbb{R}^{n_t \times n_T}$ denote the two-dimensional matrix (slice) $\mathbf{x}[r,1\leq t \leq n_t, 1 \leq T \leq n_T]$ of all measurements for a given neuron $r$ throughout all trials.
In similar fashion, $\mathbf{X}_{\cdot t \cdot} \in  \mathbb{R}^{n_r \times n_T}$ is the 2D matrix of all measurements of all neurons, for a given time $t$ for all $n_T$ trials.
Finally,  $\mathbf{X}_{\cdot \cdot T} \in  \mathbb{R}^{n_r \times n_t}$ is the 2D matrix of all measurements of all neurons throughout a single trial $T$.
A visualization of a 3D dataset and examples of 2D slices in each dimension is presented in Fig.~\ref{fig:data3d_slices}.

Considering trial-based data, we assume the within-trial time index $t$ is smooth, and all trials are of the same length $n_t$.
It is easy to define neighbors in this dimension, as it is associated with a regular fixed-length grid.
We assume the trials follow a repetitive protocol, controlled by the experimenters, yet the trial indices $T$ are discrete, not necessarily contiguous and describe a longer span of time, i.e., trials occurring on different dates. 
Thus, the trial index $T$ while being associated with the notion of time which is supposedly smooth, does not imply that two consecutive indices are similar. 
In the experimental results in Sec.~\ref{sec:results} we show that trials are grouped logically based on behavioral similarity and not based on consecutive experiments.
A global trend in the organization of the trials is evident only when introducing a pathological inhibitor, which has a long term effect on the test subject. 
Finally, we assume that a neuron index $r$ may be assigned randomly to the neuron, therefore, it does not impose any smoothness or structure, and two consecutive indices do not imply any similarity between neurons.

Thus, although the trial-based measurements are organized as a 3D database so they are supposedly associated with a regular Euclidean grid, in practice the data suffers from non-uniform sampling, and consecutive indices do not indicate actual proximity as in time-series data (temporal smoothness) or a 2D image (spatial smoothness).
Thus, conventional analysis methods, such as multiscale representations via wavelets, are not straightforward in the given application.
In order to define a multiscale analysis of the data, it is necessary to be able to define neighborhoods and a sense of locality between points. 

The notations in this paper follow these conventions: matrices and tensors are denoted by bold uppercase, sets are denoted by uppercase calligraphic, and vectors are denoted by uppercase italic.

\section{Tri-geometry analysis}
\label{sec:quest3d}
Our analysis is based on the assumption that an underlying ``good'' organization of the data exists, such that under a permutation of the indices in each dimension of the data, the resulting tensor is smooth in the three dimensions.
Our aim is to recover this organization of the data, through a local to global processing of the data.
We begin with learning the hierarchal structure of the data in each dimension via partition trees, which convey local clustering of the data.
We then construct a new multiscale bi-tree metric for one dimension based on the coupled geometry between the other two dimensions.
Finally, the tree-based metric enables us to define an affinity between points from which we derive a global representation via manifold learning.
Thus, our approach does not treat each dimension separately, but introduces a strong coupling between the dimensions.
The three-phase organization of each dimension is carried out in an iterative procedure, where each dimension is organized in turn, based on the other two.

An advantage of our approach is that it is based on modular components.
We describe three methods fulfilling the motivation of each stage, but these methods can be replaced with others. For example, we propose flexible trees for the partition tree construction, but binary trees can be used instead.
We expand on the three components of our approach in detail in the following subsections.

\subsection{Partition trees and Flexible trees}
\label{sec:trees}
Following the assumption that under a proper organization the dataset is smooth, we aim to  build a fine-to-coarse set of neighborhoods for each point, by constructing partition trees in each dimension.
In the tri-geometric organization, the neighborhoods are 3D cubes.
Permuting the points in each dimension based on the constructed partition tree will recover the smooth structure respecting the coupling between the neurons and the time dimensions of the data.

Given a set of high-dimensional points, we construct a hierarchical partitioning of the points, defined by a tree.
In our setting, for each dimension, the set of points are the 2D slices of the data in that dimension.
Without loss of generality, we will define the partition trees in this section with respect to partitioning the neurons, but this process is performed in the remaining two dimensions as well. 

Let $\mathcal{X}_r = \{\mathbf{X}_{i \cdot\cdot}\}_{i=1}^{n_r}$  be the set of all 2D neuron slices.  
We define a finite partition tree $\mathcal{T}_r$ on $\mathcal{X}_r$ as follows.
The partition tree is composed of $L+1$ levels, where a partition of the points $\mathcal{P}_l$ is defined for each level $0 \leq l \leq L$.   
The partition $\mathcal{P}_l$ at level $l$ consists of $n(l)$ mutually disjoint non-empty subsets of indices in $\{1,...,n_r\}$, termed folders and denoted by $I_{l,i}$, $i\in\{1,...,n(l)\}$:
\begin{equation}
 \mathcal{P}_l = \{I_{l,1},I_{l,2},...,I_{l,n(l)}\}.
\end{equation}
Note that we define the folders on the indices of the data points and not on the points themselves.

The partition tree $\mathcal{T}_r$ has the following properties:
\begin{itemize}
\item The finest partition ($l = 0$) is composed of $n(0) = n_r$ singleton folders, termed the ``leaves'', where $I_{0,i} = \{i\}$.
\item The coarsest partition ($l= L$) is composed of a single folder, $I_{L,1} = \{1,...,n_r\}$, termed the ``root'' of the tree. 
\item The partitions are nested such that if $I \in \mathcal{P}_l$, then $I \subseteq J$ for some $J \in \mathcal{P}_{l+1}$, i.e., each folder at level $l-1$ is a subset of a folder from level $l$.
\end{itemize}
The partition tree is the set of all folders at all levels $\mathcal{T} = \{I_{l,i} \;\vert\; 0 \leq l \leq L,\; 1 \leq i \leq n(l)\}$, and the number of all folders in the tree is denoted by $\vert \mathcal{T} \vert$.

Given a dataset, there are many methods to construct a hierarchical tree, including deterministic, random, agglomerative and divisive~\cite{Gavish2010,Chi2014,Breiman2001}.
Partition trees can be constructed in a top-down or bottom-up approach.
In a top-down approach, the data are divided into few folders, then each of these folders is divided into sub-folders, and so on until all folders at the bottom level consist of only one point.
In a bottom-up approach, we begin with the lowest level of the tree, clustering the points into small folders. 
Then these folders are merged into larger folders at higher levels of the tree, until all folders are merged at the root of the tree.

A simple approach to bottom-up construction is a $k$-means based construction.
The leaves of the tree are clustered via $k$-means into $k$ folders.
Each folder is then assigned a centroid, and these centroids are then clustered again using $k$-means, with smaller $k$.
This process is repeated until all points are merged at the root with $k=1$.

More sophisticated approaches take into account the geometric structure and multiscale nature of the data by incorporating affinity matrices on the data, and manifold embeddings.
Gavish et al.~\cite{Gavish2010} propose constructing a partition tree via bottom-up hierarchical clustering, given a symmetric affinity matrix $\mathbf{W}$ describing a weighted graph on the dataset. 
Ankenman~\cite{Ankenman2014} proposed ``flexible trees'', whose construction requires an affinity matrix on the data, and is based on a low-dimensional diffusion embedding of the data, and not on the high-dimensional points. 
The advantage of this construction, which uses the embedding rather than the high-dimensional space is that distances between points in the diffusion space are meaningful and robust to noise, as opposed to high-dimensional Euclidean distances.
This tree construction enables us to integrate both the multiscale metric and the resulting global embedding.
Since our approach is based on an iterative procedure of all three components, the tree construction is refined from iteration to iteration.

Another important advantage of flexible trees is that there are relatively few levels and the level at which folders are joined is meaningful across the entire dataset.
Thus, the tree structure is logically multiscale and follows the structure of the data. 
This also reduces the computational complexity of the metric calculation.
The construction is controlled by a constant $\epsilon$ which affects the number of levels in the trees. 
Higher values of $\epsilon$ result in ``tall'' trees, while small values lead to flatter trees.

We briefly describe the flexible trees algorithm, given the set $\mathcal{X}_r$ and an affinity matrix on the neurons denoted $\mathbf{W}_r$.
For a detailed description see~\cite{Ankenman2014}. 
\begin{enumerate}
\item Input: The set of points $\mathcal{X}_r$, an affinity matrix $\mathbf{W}_r\in \mathbb{R}^{n_r \times n_r}$, and a constant $\epsilon$.
\item Init: Set partition $I_{0,i} = \{i\} \; \forall \; 1 \leq i \leq n_r$, set $l=1$.
\item Given an affinity on the data, we construct a low-dimensional embedding on the data~\cite{Coifman2006}.
\item \label{item:dist} Calculate the level-dependent pairwise distances $d^{(l)}(i,j) \; \forall \; 1 \leq i,j \leq n_r$ in the embedding space.
\item Set a threshold $\frac{p}{\epsilon}$, where $p=\textrm{median}\left(d^{(l)}(i,j)\right)$.
\item For each point $i$ which has not yet been added to a folder, find its minimal distance $d^{\min}(i)=\min_j\{d^{(l)}(i,j)\}$.
\begin{itemize}
\item If $d^{\min}(i)<\frac{p}{\epsilon}$, $i$ and $j$ form a new folder if $j$ also does not belong to a folder. 
If $j$ is already part of a folder $I$, then $i$ is added to that folder if $d^{\min}(i)<\frac{p}{\epsilon} 2^{-\vert I \vert + 1}$. 
Thus, the threshold on the distance for adding an element to an existing folder is divided by 2 for each added element.
\item If $d^{\min}(i) > \frac{p}{\epsilon}$, $i$ remains as a singleton folder.
\end{itemize}
\item \label{item:partition} The partition $P_l$ is set to be all the formed folders.
\item For $l>1$ and while not all points have been merged together in a single folder, steps \ref{item:dist})-\ref{item:partition}) are repeated. 
Instead of iterating over points, we iterate over all the folders $I_{l-1,i} \in P_{l-1}$. 
The distances between folders depend on the level $l$,  and on the points in the folder.
\end{enumerate}

The trees yield a hierarchical multiscale organization of the data, which then enables us to apply signal processing methods.
For example, we can apply non-local means to each point based on its neighborhood, to denoise the data, or multiscale analysis via tree based wavelets~\cite{Buades2005,Gavish2010,Ram}.
However, we aim at a global analysis of the data.
To this end, we define a bi-tree multiscale metric, which compares two points, based on their decomposition via the trees.

\subsection{Data-adaptive bi-tree multiscale metric} 
Applying manifold learning requires an appropriate metric between points.
As we cannot associate a sense of locality based on the indexing of the dimensions, we treat the data as vertices in a graph and develop a metric that is based on the multi-scale neighborhoods constructed in the partition tree. 
Given the partition trees in two of the dimensions, our aim is to define a distance $d$ between two 2D slices in the remaining dimension.
This distance should incorporate the multiscale nature of the data.

Following Leeb~\cite{Leeb2015}, we propose a tree-based metric in one dimension that incorporates the coupling of the other two dimensions.
For a two-dimensional matrix, Leeb~\cite{Leeb2015} defines a tree-based metric between two points in one dimension based on a partition tree in the other dimension.
We will present this metric in our context and then propose a new metric incorporating two partition trees in the case of a 3D dataset.

Consider a single 2D slice of the trial data $\mathbf{X}_{\cdot \cdot T}$, and the partition tree on the neurons $\mathcal{T}_r$.
A point $\mathbf{X}_{\cdot t_i T}$ in the time dimension  is a vector of length $n_r$, consisting of all the neuronal measurements at the time frame $t_i$ during a given trial $T$.
The tree metric $d_{\mathcal{T}_r}(\mathbf{X}_{\cdot t_i T},\mathbf{X}_{\cdot t_j T})$ between two times $t_i$ and $t_j$ within this trial, given the tree $\mathcal{T}_r$ is defined as
\begin{equation}
\label{eq:emd}
d_{\mathcal{T}_r}(\mathbf{X}_{\cdot t_i T},\mathbf{X}_{\cdot t_j T}) = \sum_{I \in \mathcal{T}_r} \vert m(\mathbf{X}_{\cdot t_i T} - \mathbf{X}_{\cdot t_j T}, I)\vert \omega(I),
\end{equation}
where $\omega(I)>0$ is a weight function, depending on the folder $I$.
The value $m(\mathbf{X}_{\cdot t_i T}, I)$ is the mean of vector $\mathbf{X}_{\cdot t_i T}$ in $I$:
\begin{equation}
m(\mathbf{X}_{\cdot t_i T}, I) = \frac{1}{\vert I \vert} \sum_{k\in I} \mathbf{X}[k,t_i,T], 
\end{equation}
where $\vert I \vert$ denotes the number of points in folder $I$, i.e., its cardinality.
The metric encompasses the ability to weight the data based on its multiscale decomposition since each folder is assigned a weight via $\omega$.
The weights can incorporate prior smoothness assumptions on the data, and also enable enhancing either coarse or fine structures in the similarity between points. 

We generalize this metric to a distance between 2D matrices, given two partitions trees.
We define this distance for the trial dimension, given trees on the time and neuron dimensions, but the same applies in the other dimensions as well.
Given a partition tree $\mathcal{T}_r$ on the neurons and a partition tree $\mathcal{T}_t$ on the time frames, the distance between two trials $T_i$ and $T_j$ is defined as
\begin{equation}
\label{eq:2demd}
d_{\mathcal{T}_r,\mathcal{T}_t}(\mathbf{X}_{\cdot \cdot i} , \mathbf{X}_{\cdot \cdot j}) = \sum_{\substack{I \in \mathcal{T}_r \\ J \in \mathcal{T}_t}} \vert m(\mathbf{X}_{\cdot \cdot i} - \mathbf{X}_{\cdot \cdot j}, I \times J)\vert \omega(I,J),
\end{equation}
where $\omega(I,J)>0$ is a weight function depending on folders $I \in \mathcal{T}_r $ and $J \in \mathcal{T}_t$.
We term this distance a bi-tree metric.
The value $m(\mathbf{X}_{\cdot \cdot i}, I \times J)$ is the mean value of a matrix $\mathbf{X}_{\cdot \cdot i}$ on the bi-folder $I \times J = \{ (k,n) \vert k\in I, n\in J \}$:
\begin{equation}
m(\mathbf{X}_{\cdot \cdot i}, I \times J) = \frac{1}{\vert I \vert \vert J \vert } \sum_{k \in I, n\in J} \mathbf{X}[k,n,i],
\end{equation}
i.e., for a given trial $T$, we are averaging the sub-matrix of the 2D slice  $\mathbf{X}_{\cdot \cdot i}$ defined by the subset of neurons in $I$ and the subset of time frames in $J$.

We present a new interpretation of the tree-based metrics~(\ref{eq:emd}) and~(\ref{eq:2demd}).
These metrics are equivalent to the $l_1$ distance between points, after applying a multiscale transform to the data, where the tree metric~(\ref{eq:emd}) corresponds to a 1D transform and the bi-tree metric~(\ref{eq:2demd}) corresponds to a 2D transform.
For the sake of simplicity we begin with describing the 1D transform in the case of a single 2D slice of the trial data $\mathbf{X}_{\cdot \cdot T}$, and then generalize to the 2D transform. 

The partition tree $\mathcal{T}_r$ can be seen as inducing a multiscale decomposition on the data, via the construction of a data-adaptive filter bank.
Define the filter $f_I \in \mathbb{R}^{n_r}$ as
\begin{equation}
\label{eq:filter}
 f_I=\frac{\omega(I)}{\vert I \vert} \mathds{1}_I,
\end{equation}
such that $\mathds{1}_I$ is the indicator function on the neurons $i\in\{1,...,n_r\}$ belonging to folder $I \in \mathcal{T}_r$.
For each filter we calculate the inner product between the filter $f_I$ induced by folder $I$ and the measurement vector $\mathbf{X}_{\cdot t T} \in \mathbb{R}^{n_r}$, yielding a scalar coefficient $g_I$: 
\begin{equation}
\label{eq:coef}
\begin{split}
 g_I(\mathbf{X}_{\cdot t T}) & = \langle f_I,\mathbf{X}_{\cdot t T} \rangle \\ 
 & = \frac{\omega(I)}{ \vert I\vert} \sum_{k\in I} \mathbf{X}[k,t,T]=m(\mathbf{X}_{\cdot t T},I)\omega(I).
 \end{split}
\end{equation}

The tree $\mathcal{T}_r$ defines a multiscale transform by applying filter bank $f_{\mathcal{T}_r} = \{f_I\}_{I\in\mathcal{T}_r}$ to the measurements vector $\mathbf{X}_{\cdot t T}$, resulting in the set of coefficients $g_{\mathcal{T}_r} = \{g_I\}_{I\in\mathcal{T}_r}$. 
The filters of each level $l$ of the tree output $n(l)$ coefficients, such that $g_{\mathcal{T}_r} : x \mapsto \mathbb{R}^{\vert \mathcal{T}_r \vert}$.
This is demonstrated in Fig.~\ref{fig:transform}. 
In the middle, a 2D slice $\mathbf{X}_{\cdot t T}$ is viewed as a 2D matrix and on the left is a partition tree $\mathcal{T}$ defined on the rows of the matrix.
We assume that the rows of the matrix have been permuted so they correspond with leaves of the tree (level 0).
In applying the transform $g_\mathcal{T}$, each folder $I$ defines an element in the new vector $g_\mathcal{T}(X_{\cdot i})$ (right), proportional to the average of the entries in the original vector $(X_{\cdot i})$ on the support defined by the folder $I$.
The new entries in the vector are colored according to the corresponding folders in the tree.

\begin{theorem}
 Given a partition tree on the neurons $\mathcal{T}_r$, the tree metric~(\ref{eq:emd}) between two times $t_j$ and $t_j$ for a given trial $T$ is equivalent to the $l_1$ distance between the multiscale transform defined by the tree and applied to the two vectors:
 \begin{equation}
  d_{\mathcal{T}_r}(\mathbf{X}_{\cdot t_i T},\mathbf{X}_{\cdot t_j T}) = \Vert g_{\mathcal{T}_r}(\mathbf{X}_{\cdot t_i T}) - g_\mathcal{T}(\mathbf{X}_{\cdot t_j T}) \Vert_1.
\end{equation}
\end{theorem}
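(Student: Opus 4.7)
The plan is to prove this by a direct algebraic manipulation, using linearity of the mean operator $m(\cdot, I)$ together with the explicit formula for the transform coefficients $g_I$ derived in equation~(\ref{eq:coef}). There is no real technical obstacle: the statement is essentially a re-indexing identity, but the proof is worth writing out because it cleanly explains why the tree metric is nothing more than a weighted $\ell_1$ norm on a transform side, which is the conceptual message the theorem delivers.

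First, I would observe that the mean operator is linear in its first argument: since $m(\mathbf{X}_{\cdot t T}, I) = \frac{1}{|I|}\sum_{k \in I} \mathbf{X}[k,t,T]$ is a finite sum, for any two time indices $t_i, t_j$ and any folder $I \in \mathcal{T}_r$ we have
\begin{equation}
m(\mathbf{X}_{\cdot t_i T} - \mathbf{X}_{\cdot t_j T}, I) = m(\mathbf{X}_{\cdot t_i T}, I) - m(\mathbf{X}_{\cdot t_j T}, I).
\end{equation}
Multiplying both sides by $\omega(I)$ and taking absolute values, and then invoking equation~(\ref{eq:coef}), this becomes
\begin{equation}
|m(\mathbf{X}_{\cdot t_i T} - \mathbf{X}_{\cdot t_j T}, I)|\,\omega(I) = |g_I(\mathbf{X}_{\cdot t_i T}) - g_I(\mathbf{X}_{\cdot t_j T})|,
\end{equation}
where we used $\omega(I) > 0$ so that it passes inside the absolute value unchanged.

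Next I would sum this identity over all folders $I \in \mathcal{T}_r$. The left-hand side is by definition $d_{\mathcal{T}_r}(\mathbf{X}_{\cdot t_i T}, \mathbf{X}_{\cdot t_j T})$ as given in~(\ref{eq:emd}). The right-hand side is exactly the sum of absolute values of the entries of the vector $g_{\mathcal{T}_r}(\mathbf{X}_{\cdot t_i T}) - g_{\mathcal{T}_r}(\mathbf{X}_{\cdot t_j T}) \in \mathbb{R}^{|\mathcal{T}_r|}$, indexed by folders, which is precisely its $\ell_1$ norm. This yields the claimed equality and concludes the proof.

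The only subtlety worth flagging is the indexing convention: one must be careful that the entries of $g_{\mathcal{T}_r}(\cdot)$ are listed in the same order on both sides, so that the sum over $I \in \mathcal{T}_r$ really matches the coordinate-wise $\ell_1$ sum. Because the transform is defined coordinate-by-coordinate in~(\ref{eq:coef}), with one coordinate per folder, this bookkeeping is immediate and no reordering argument is needed.
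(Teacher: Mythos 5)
Your proposal is correct and follows essentially the same route as the paper's proof: linearity of the mean operator $m(\cdot,I)$, absorbing the positive weight $\omega(I)$ into the absolute value to recognize the coefficients $g_I$ from~(\ref{eq:coef}), and summing over all folders to obtain the $\ell_1$ norm. The remark about the folder-indexing convention is a harmless addition; no further changes are needed.
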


\begin{proof}
\begin{align}
  d_{\mathcal{T}_r}(\mathbf{X}_{\cdot t_i T} &, \mathbf{X}_{\cdot t_j T}) 
   =  \sum_{I \in \mathcal{T}_r} \vert m(\mathbf{X}_{\cdot t_i T} - \mathbf{X}_{\cdot t_j T}, I)\vert \omega(I) \nonumber \\
  & =  \sum_{I \in \mathcal{T}_r} \vert m(\mathbf{X}_{\cdot t_i T},I)\omega(I) - m(\mathbf{X}_{\cdot t_j T}, I) \omega(I)\vert \nonumber \\ 
  & = \sum_{I \in \mathcal{T}_r} \vert g_I(\mathbf{X}_{\cdot t_i T}) - g_I(\mathbf{X}_{\cdot t_j T}) \vert \nonumber \\
  & = \sum_{n=1}^{\vert \mathcal{T}_r \vert} \vert g_{\mathcal{T}_r}(\mathbf{X}_{\cdot t_i T})[n] - g_{\mathcal{T}_r}(\mathbf{X}_{\cdot t_j T})[n] \vert \nonumber \\
  & = \Vert g_{\mathcal{T}_r}(\mathbf{X}_{\cdot t_i T}) - g_{\mathcal{T}_r}(\mathbf{X}_{\cdot t_j T}) \Vert_1. 
\end{align}
\end{proof}
This result can be generalized to a multiscale 2D transform applied to 2D matrices as in our setting.
Define the 2D filter $f_{I\times J}$ by:
\begin{equation}
\label{eq:filter2d}
 f_{I\times J}=\frac{\omega(I,J)}{\vert I \vert\vert J \vert} \mathds{1}_I \otimes  \mathds{1}_J^T,
\end{equation} 
where $\otimes$ denotes the Kronecker product between the two indicator vectors.
Then the elements of the 2D matrix $g_{\mathcal{T}_r,\mathcal{T}_t} \in \mathbb{R}^{\vert \mathcal{T}_r \vert \times \vert \mathcal{T}_t \vert }$ are the coefficients obtained from applying the 2D filter bank $f_{\mathcal{T}_r,\mathcal{T}_t} = \{f_{I\times J}\}_{I \in \mathcal{T}_r,J \in \mathcal{T}_t}$ defined by the bi-tree $\mathcal{T}_r \times \mathcal{T}_t$.
\begin{corollary}
 The bi-tree metric~(\ref{eq:2demd}) between two matrices given a partition tree $\mathcal{T}_r$ on the neurons and a partition tree $\mathcal{T}_t$ on the time frames is equivalent to the $l_1$ distance between the 2D multiscale transform of the two matrices:
\begin{equation}
d_{\mathcal{T}_r,\mathcal{T}_t}(\mathbf{X}_{\cdot \cdot i} , \mathbf{X}_{\cdot \cdot j}) = \Vert g_{\mathcal{T}_r,\mathcal{T}_t} (\mathbf{X}_{\cdot \cdot i}) - g_{\mathcal{T}_r,\mathcal{T}_t}(\mathbf{X}_{\cdot \cdot j}) \Vert_1.
\end{equation}
\end{corollary}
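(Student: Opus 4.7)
The plan is to mimic the proof of the preceding theorem, replacing the 1D filter bank indexed by folders $I \in \mathcal{T}_r$ with the 2D filter bank indexed by bi-folders $I \times J \in \mathcal{T}_r \times \mathcal{T}_t$. The main work is to establish the 2D analogue of equation~(\ref{eq:coef}), namely that the inner product of the 2D filter $f_{I\times J}$ with a matrix $\mathbf{X}_{\cdot\cdot i}$ equals $m(\mathbf{X}_{\cdot\cdot i}, I\times J)\,\omega(I,J)$. Once this is in hand, the bi-tree distance~(\ref{eq:2demd}) unfolds term-by-term into an $\ell_1$ norm exactly as in the 1D case.

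First, I would interpret the inner product between two matrices in the standard Frobenius sense, i.e.\ $\langle A, B\rangle = \sum_{k,n} A[k,n]\,B[k,n]$, so that applying the filter $f_{I\times J}$ from~(\ref{eq:filter2d}) to $\mathbf{X}_{\cdot\cdot i}$ yields
\begin{equation*}
\langle f_{I\times J}, \mathbf{X}_{\cdot\cdot i}\rangle
= \frac{\omega(I,J)}{|I|\,|J|}\sum_{k\in I}\sum_{n\in J}\mathbf{X}[k,n,i]
= m(\mathbf{X}_{\cdot\cdot i}, I\times J)\,\omega(I,J).
\end{equation*}
This is the direct 2D analogue of~(\ref{eq:coef}), and it follows because $\mathds{1}_I\otimes\mathds{1}_J^{T}$ is precisely the indicator of the bi-folder $I\times J$, so the Frobenius inner product collapses to a sum over that sub-rectangle. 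Denoting the resulting coefficient by $g_{I\times J}(\mathbf{X}_{\cdot\cdot i})$, the entries of the matrix $g_{\mathcal{T}_r,\mathcal{T}_t}(\mathbf{X}_{\cdot\cdot i}) \in \mathbb{R}^{|\mathcal{T}_r|\times|\mathcal{T}_t|}$ are indexed by the bi-folders.

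Next, by linearity of both the mean and the inner product, $m(\mathbf{X}_{\cdot\cdot i}-\mathbf{X}_{\cdot\cdot j}, I\times J) = m(\mathbf{X}_{\cdot\cdot i},I\times J) - m(\mathbf{X}_{\cdot\cdot j},I\times J)$. Substituting into~(\ref{eq:2demd}) and pulling $\omega(I,J)>0$ inside the absolute value gives
\begin{equation*}
d_{\mathcal{T}_r,\mathcal{T}_t}(\mathbf{X}_{\cdot\cdot i},\mathbf{X}_{\cdot\cdot j})
= \sum_{\substack{I\in\mathcal{T}_r\\ J\in\mathcal{T}_t}} \bigl| g_{I\times J}(\mathbf{X}_{\cdot\cdot i}) - g_{I\times J}(\mathbf{X}_{\cdot\cdot j})\bigr|,
\end{equation*}
which is exactly the $\ell_1$ norm of the entrywise difference of the two 2D coefficient arrays, i.e.\ $\Vert g_{\mathcal{T}_r,\mathcal{T}_t}(\mathbf{X}_{\cdot\cdot i}) - g_{\mathcal{T}_r,\mathcal{T}_t}(\mathbf{X}_{\cdot\cdot j})\Vert_1$, completing the proof.

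No step is really hard; the only point that needs care is the \emph{bookkeeping} of the 2D indexing. Specifically, one must be explicit that the indexing set of the filter bank is $\mathcal{T}_r\times\mathcal{T}_t$ (not the Cartesian product of nodes at a common level), and that the $\ell_1$ norm on the matrix $g_{\mathcal{T}_r,\mathcal{T}_t}$ means the sum of absolute values over all $|\mathcal{T}_r|\cdot|\mathcal{T}_t|$ entries. Apart from this, the argument is a verbatim lifting of the 1D theorem, so no additional ideas are required.
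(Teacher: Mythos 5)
Your proof is correct and follows exactly the route the paper intends: the paper leaves this corollary as an immediate generalization of the preceding theorem, and your argument --- establishing the 2D analogue of~(\ref{eq:coef}) via the Frobenius inner product with $f_{I\times J}$ and then unfolding~(\ref{eq:2demd}) term-by-term into the $\ell_1$ norm over all bi-folders --- is precisely that lifting. No issues.
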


\begin{figure}[t]
\centering{\includegraphics[width=0.99\linewidth]{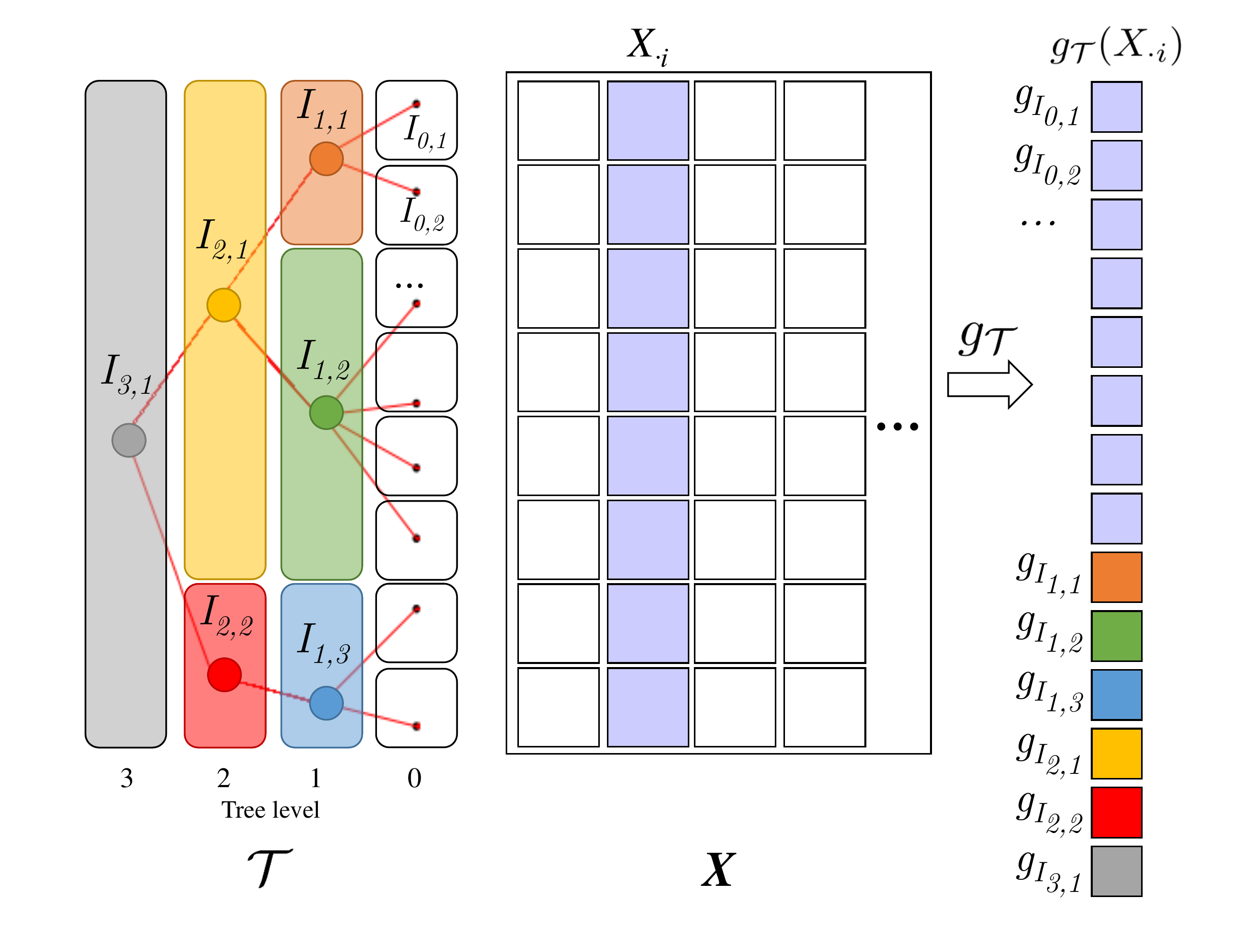}}
\caption{Multiscale 1D tree-transform applied to a 2D slice from Fig.~\ref{fig:data3d_slices}, viewed here as a 2D matrix (middle).
On the left is a given partition tree $\mathcal{T}$ on the rows of the 2D matrix, and we assume the rows have been permuted so the leaves of the tree correspond to the rows.
The partition tree $\mathcal{T}$ defines a multiscale transform on the columns of the matrix $X_{\cdot i}$, resulting in new vectors $g_\mathcal{T}(X_{\cdot i})$. 
In applying the transform $g_\mathcal{T}$, the entries in $X_i$ corresponding to each folder in the tree, are averaged and weighted according to~(\ref{eq:coef}).
This yields new scalar coefficients which form the output vector $g_\mathcal{T}(X_{\cdot i})$ (right).
For visualization, each new entry $g_I$ is colored by the corresponding folder $I$ in the tree.
}
\label{fig:transform}
\end{figure}
This interpretation of the metric as the $l_1$ distance between multiscale transforms has two computational advantages.
First, given large datatsets, it is inefficient to calculate full affinity matrices on the points, and instead sparse matrices are used by finding $k$-nearest neighbors of each point.
Thus, we can apply the multiscale transform to our data, yielding a new feature vector for each point, and then apply approximate nearest-neighbor search for the $l_1$ distance to the new vectors~\cite{Arya:1998,Yi2000}.
Second, we can relax the $l_1$ norm to other norms such as $l_2$ or $l_\infty$.
In future work, we intend to establish the properties of this transform and its application to other tasks.

Note that we claimed that regular metrics are inappropriate in processing the data due to its high-dimensionality in each dimension of the 3D dataset, i.e., each 2D slice of the data contain a large number of elements.
This interpretation of the metric via the transform yields that the proposed metric is equivalent to the $l_1$ distance between vectors/matrices of even higher-dimensionality, supposedly contradicting our aim for a good metric.
However, due to encompassing weights on the folders, the effective size of the new vectors is smaller than the original dimensionality, as the weights are chosen such that they rapidly decrease to zero based on the folder size.

We note that by using full binary trees in each of the two dimensions, the output of applying the multiscale transform is similar to that of applying the Gaussian pyramid representation, popular in image processing~\cite{Burt1983}, to each 2D matrix $\mathbf{X}_{\cdot \cdot i}$, $1 \leq i \leq n_T$.
Instead of applying the $5 \times 5$ Gaussian filter proposed by Burt and Adelson, our transform applies a $2 \times 2$ averaging filter, weighted by $\omega(I,J)$, and the resolution at each level will be reduced by 2 as in the Gaussian pyramid.

\paragraph*{Relationship to EMD}
The Earth mover's distance (EMD) is a metric used to compare probability distributions or discrete histograms, and is popular in computer vision\cite{Rubner1998}. 
It is fairly insensitive to perturbations since it does not suffer from the fixed binning problems of most distances between histograms. 
EMD quantifies the difference between the two histograms as the amount of mass one needs to move (flow) between histograms, with respect to a ground distance, so they coincide. 
In its discrete form, the EMD between two normalized histograms $h_1$ and $h_2$ is defined as the minimal total ground distance ``traveled'' weighted by the flow:
\begin{align*}
\textrm{EMD}(h_1,h_2)  = \min \sum_{i,j}g_{ij}d_{ij} \\
 \textrm{s.t.}  \sum_i g_{ik} - \sum_j g_{kj} = h_1(k)-h_2(k),
\end{align*}
where $d_{ij}\geq 0$ is the ground distance, and $g_{ij}$ is the flow from bin $i$ to bin $j$.

It was shown~\cite{Leeb2013} that a proper choice of the weight $\omega(I)$ makes the tree metric~(\ref{eq:emd}) equivalent to EMD, i.e., the ratio of EMD to the tree-based metric is always between two constants.
The proof follows the Kantorovich-Rubinstein theorem regarding the dual representation of the EMD problem.
The weight $\omega(I)$ in~(\ref{eq:emd}) is chosen to depend on the tree structure:
\begin{equation}
\label{eq:weight}
\omega(I) = \left(\frac{\vert I \vert}{M}\right)^{\beta+1},
\end{equation}
where $\beta$ weights the folder by its relative size.
Positive values of $\beta$ correspond to higher weights on coarser scales of the data, whereas negative values emphasize differences in fine structures in the data.
For trees with varying-sized folders, unlike a balanced binary tree, $\beta$ helps to normalize the weights on folders.
For $\beta=0$, the filter $f_I$ defined in~(\ref{eq:filter}) is a uniform averaging filter whose support is determined by $I$.
In EMD the histograms are associated with a fixed grid and bins quantizing this grid.
In our setting, where the data does not follow a fixed grid, the folders take the place of the bins, and by incorporating their multiscale structure via the weights, they can be seen as bins of varying sizes on the data.

Shirdhonkar and Jacobs~\cite{Shirdhonkar2008} proposed a wavelet-based metric (wavelet EMD) shown to be equivalent to EMD.
The wavelet EMD is calculated as the weighted $l_1$ distance between the wavelet coefficients of the difference between the two histogram.
Following~\cite{Shirdhonkar2008}, Leeb~\cite{Leeb2013} proposed a second metric based on the $l_1$ distance between the coefficients of the difference of distributions expanded in the tree-based Haar-like basis~\cite{Gavish2010}, which was also  shown to be equivalent to EMD.
Our interpretation of the metric~(\ref{eq:emd}) as an $l_1$ distance between a multi-scale filter bank applied to the data, simplifies the calculation even more as it does not require calculating the Haar-like basis defined by the tree, and instead requires only low-pass averaging filters on the support of each folder.
This generalizes the wavelet EMD~\cite{Shirdhonkar2008}, to high-dimensional data that is not restricted to a Euclidean grid.

For the bi-tree metric~(\ref{eq:2demd}), the weight on a bi-folder $I \times J$ can be chosen in an equivalent manner to~(\ref{eq:weight}) as
\begin{equation}
\label{eq:omega2d}
\omega(I,J) = \left(\frac{\vert I \vert}{n_r}\right)^{\beta_r+1} \left(\frac{\vert J \vert}{n_t}\right)^{\beta_t+1},
\end{equation}
where $\beta_r$ weights the bi-folder $I \times J$ based on the relative size of folder $I\in\mathcal{T}_r$ and $\beta_t$ weights the bi-folder based on the relative size of $J \in \mathcal{T}_t$.
The values should be set according to the smoothness of the dimension and whether we intend to enhance coarse or fine structures in the data.

\subsection{Global Embedding}
The intrinsic global representation of the data is attained by an integration process of local affinities, often termed ``diffusion geometry". 
Specifically, the encoding of local variability and structure from different locations (e.g., cortical regions, or trials) is aggregated into a single comprehensive representation through the eigendecomposition of an affinity kernel~\cite{Coifman2006}. 
This global embedding  preserves local structures in the data, thus enabling us to exploit the fine spatio-temporal variations and inter-trial variability typical of biological data, in contrast to other methods based on averaging and smoothing the data~\cite{Pfau2013}. 

Given the bi-tree multiscale distance between two points (\ref{eq:2demd}), we can construct an affinity on the data along each dimension. 
We choose an exponential function, but other kernels can be considered, dependent on the application.
Without loss of generality, we describe the embedding calculation with respect to the dimension of the neurons, but this procedure is applied to the time and trials as well, within our iterative framework. 
Given the multiscale distance $d_{\mathcal{T}_t,\mathcal{T}_T}(\mathbf{X}_{i \cdot \cdot}, \mathbf{X}_{j \cdot \cdot})$ between two neurons $r_i$ and $r_j$, the affinity is defined as:
\begin{equation}
 a(r_i, r_j) = \exp\{ - d_{\mathcal{T}_t,\mathcal{T}_T}(\mathbf{X}_{i \cdot \cdot}, \mathbf{X}_{j \cdot \cdot}) / \sigma_r \},
\end{equation}
where $\sigma_r$ is a scale parameter, and depends on the current considered dimension of the 3D data, i.e., each dimension uses a different scale in its affinity.
Typically, $\sigma_r$ is chosen to be the mean of distances within the data.
The exponential function enhances locality, as points with distance larger than $\sigma_r$ have negligible affinity. 

The affinity is used to calculate a low-dimensional embedding of the data, using manifold learning techniques, specifically diffusion maps~\cite{Coifman2006}. 
Defining an affinity matrix $\mathbf{A}[i,j] = a(r_i, r_j),\; \mathbf{A} \in \mathbb{R}^{n_r \times n_r}$, we derive a corresponding row-stochastic matrix by normalizing its rows:
\begin{equation}
\mathbf{P} = \mathbf{D}^{-1}\mathbf{A}, 
\end{equation}
where $\mathbf{D}$ is a diagonal matrix whose elements are given by $\mathbf{D}[i,i] = \sum_j \mathbf{A}[i,j]$.
The eigendecomposition of $\mathbf{P}$ yields a a sequence of positive decreasing eigenvalues: $1 = \lambda_0\geq\lambda_1\geq ...$, and right eigenvectors $\{\psi_\ell\}_\ell$.
Retaining only the first $d$ eigenvalues and eigenvectors, the mapping $\Psi_r$ embeds the data set $X$ into the Euclidean space $\mathbb{R}^{d}$:
\begin{equation}
\label{eq:diffusion_map}
\Psi_r:\mathbf{X}_{r_i \cdot \cdot}\rightarrow \big( \lambda_1\psi_1(i), \lambda_2\psi_2(i),..., \lambda_{d}\psi_{d}(i)\big)^T.
\end{equation}
Note that for simplicity of notation we omit denoting the eigenvalues and eigenvectors by the relevant dimension $r$, $t$ or $T$.
The embedding provides a global low-dimensional representation of the data, which preserves local structures.
Euclidean distances in this space are more meaningful than in the original high-dimensional space, as they have been shown to be robust to noise.
The flexible tree construction is based on the embedding for these reasons.
Finally, the embedding integrates the local connections found in the data into a global representation, which enables visualization of the data, reveals overlying temporal trends, organizes the data into meaningful clusters, and identifies outliers and singular points.
For more details on diffusion maps, see~\cite{Coifman2006}.

\subsection{Algorithm}
\label{sec:implement}
\begin{algorithm}[t]
\caption{Hierarchical tri-geometric analysis}
\label{alg:3d}
\algsetup{indent=2em}
\begin{algorithmic}[1]
\INIT
\INPUT 3D data matrix $\mathbf{X}$
\STATE Starting with the neuron dimension $r$
\STATE \label{step:aff} \hspace{0.5cm} Calculate initial affinity matrix $\mathbf{A}_r^{(0)}(i,j)$ 
\STATE  \label{step:embed} \hspace{0.5cm} Calculate initial neuron embedding $\Psi_r^{(0)}$.
\STATE \label{step:tree} \hspace{0.5cm} Calculate initial flexible tree  $\mathcal{T}_r^{(0)}$. 
\STATE \label{step:time} For time dimension $t$ repeat steps \ref{step:aff}-\ref{step:tree} and obtain $\mathcal{T}_t^{(0)}$.
\ANALYSIS
\INPUT Flexible trees $\mathcal{T}_r^{(0)}$ and $\mathcal{T}_t^{(0)}$
\FOR{$n \geq 1$}
\STATE \label{step:dist} Calculate multiscale bi-tree distance between two trials $d(T_i,T_j)=d_{\mathcal{T}_r^{(n-1)},\mathcal{T}_t^{(n-1)}}(\mathbf{X}_{\cdot \cdot i}, \mathbf{X}_{\cdot \cdot j})$
\STATE  Calculate trial affinity matrix \\ $\mathbf{A}_T^{(n)}(i,j) = \exp\left\{-d(T_i,T_j) / \sigma_T \right\}$
\STATE \label{step:embed2} Calculate trial embedding $\Psi_T^{(n)}$
\STATE \label{step:tree2} Calculate flexible tree on the trials $\mathcal{T}_T^{(n)}$. 
\STATE For the neuron dimension $r$, repeat steps \ref{step:dist}-\ref{step:tree2}, given the trees on the time and trials, $\mathcal{T}_t^{(n-1)}$ and $\mathcal{T}_T^{(n)}$ respectively, and obtain $\mathcal{T}_r^{(n)}$.
\STATE For the time dimension $t$, repeat steps \ref{step:dist}-\ref{step:tree2}, given the trees on the trials and neurons, $\mathcal{T}_T^{(n)}$ and $\mathcal{T}_r^{(n)}$ respectively, and obtain $\mathcal{T}_t^{(n)}$.
\ENDFOR
\end{algorithmic}
\end{algorithm}
Our iterative analysis algorithm composing all three components (tree construction, metric building, embedding) is summarized in Algorithm~\ref{alg:3d}.
Note that the order in which the dimensions are processed is arbitrary, and it affect the final results. 
In addition, since the algorithm is iterative and each component relies on the previous components, an initialization is required.
Specifically, calculation of the bi-tree metric for one dimension requires that partition trees be calculated on the other two dimensions.

One option is to calculate an initial affinity matrix based on a general distance such as the Euclidean distance or cosine similarity.
Here, we use the cosine similarity:
\begin{equation}
\label{eq:cos}
 a^{\cos}(r_i,r_j) = \frac{\sum_{t,T} x[i,t,T]x[j,t,T]}{\sqrt{\sum_{t,T} \left(x[i,t,T] \right)^2} \sqrt{\sum_{t,T} \left( x[j,t,T] \right)^2}}.
\end{equation}
Note that although the affinity is supposedly between two matrices, effectively it is equivalent to reshaping the matrices as 1D vectors and calculating the affinity using 1D distances.
In other words, a general affinity does not take into account the 2D structure of the slices of the 3D data, in contrast to our bi-tree metric.
In addition, these distances are uninformative, as the data are extremely high-dimensional. 
For example, in each dimension of the dataset in the experimental results in Sec.~\ref{sec:results}, the dimension of the measurements is of order $10^4$.

Given the initial affinity, an embedding and flexible tree are calculated for the neuron dimension $r$ (steps \ref{step:embed}-\ref{step:tree}). 
This is then repeated for the time dimension (step~\ref{step:time}).
One second option is to initialize the partition tree for the time dimension to be a binary tree, since the intra-trial time $t$ is a smooth variable. 

Given the trees in two of the dimensions, we can calculate the multiscale metric~(\ref{eq:2demd}) in the trial dimension $T$ (step~\ref{step:dist}).
A corresponding embedding and flexible tree are then calculated (steps \ref{step:embed2}-\ref{step:tree2}).
We now have a partition tree in each dimension, so we continue in an iterative fashion, going over each dimension and calculating the multiscale metric, diffusion embedding and flexible tree in each iteration, based on the other two dimensions.
The resulting output of the algorithm can be used to analyze the data both in terms of its hierarchical structure and through visualization of the embedding.
Furthermore, each dimension can be organized by calculating a smooth trajectory in its embedding space.
This yields a permutation on the indices of the given dimension. 
Permuting all three dimensions recovers the smooth structure of the data, respecting the coupling between the neurons and the time dimensions of the data.
Python code implementing Algorithm~\ref{alg:3d} will be released open-source on publication.

\section{Results}
\label{sec:results}

\subsection{Experimental Setup}
\begin{figure}[t!]
\centering
\includegraphics[width=0.75\linewidth]{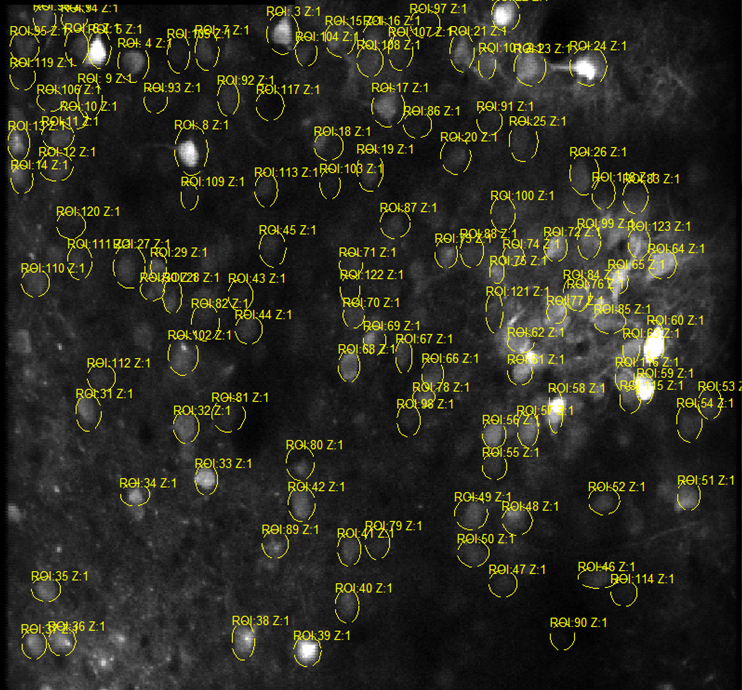}
\caption{Two-photon imaging in the primary motor cortex (M1). The neuronal measurements are gathered into regions of interest (ROIs), consisting of ellipses, and preprocessed as in (\ref{eq:F})-(\ref{eq:dFF}).}
\label{fig:exp}
\end{figure}
Our experimental data consists of repeated trials of a complex motor forepaw reach task in awake mice.
The animals were trained to reach for a food pellet upon hearing an auditory cue~\cite{Whishaw1992}. 
This complex and versatile task exploits the capability of rodents to use their forepaw very similarly to distal hand movements in primates~\cite{Whishaw1992}.
The hand reach task is typically learnt by mice over a period of few weeks to become ``experts" (success rate of $\sim 70\%-80\%$ after training over 2-3 weeks). 

Neuronal activity in the motor cortex during task performance was measured using two photon in-vivo calcium imaging with the recently developed genetically encoded indicators (GECIs)~\cite{Chen2013}. 
In addition the network was silenced using DREADDS~\cite{Rogan2011}, which was activated using intraperitoneal (IP) injection of the inert agonist (clozapine-N-oxide (CNO).
The analyzed neuronal measurements are of optical calcium fluorescent activity collected from a large population of identified neurons from cortical regions of interest, acquired using two photon microscopy imaging (Fig.~\ref{fig:exp}).
In conjunction, high-resolution behavioral recordings of the subject are acquired using a camera (400 Hz).
This serves to label the time frames and to determine whether the subject performed the task successfully during the trial.

The fluorescent measurements are manually grouped into elliptical regions of interest (ROIs) (Fig.~\ref{fig:exp}), and preprocessing is applied as follows. 
The spatial average fluorescence of each ROI $k$ per time frame $t$ in a single trial is
\begin{equation}
\label{eq:F}
F_k(t) = \frac{1}{\vert \textrm{ROI}_k \vert}\sum_{i,j\in \textrm{ROI}} I[i,j,t], 
\end{equation}
where $I$ is the florescence image, $i$ and $j$ are the pixel row and column indices in the image, respectively, and $\vert \textrm{ROI}_k \vert$ is the area of the $k$-th ROI.
The baseline florescence for ROI $k$ in a single trial $T$ is calculated using a subset of time frames $S_k$ corresponding to the florescent averages $F_k(t)$ with the $10\%$ lowest values $\bar{F}_k = \sum_{t\in S_k} F_k(t)$. 
Finally, the neuron measurement at each time frame $\mathbf{X}[k,t,T]$ is set using $\frac{\Delta F}{F}$:
\begin{equation}
\label{eq:dFF}
\mathbf{X}[k,t,T] = \frac{F_k(t) - \bar{F}_k}{ \bar{F}_k}.
\end{equation}
For simplicity, we refer to the ROIs as neurons in our analysis.

\subsection{Data}
We focus on neuronal measurements from the primary motor cortex region (M1), taken from a specific mouse in a single day of experimental training sessions. 
The data is composed of 59 consecutive trials, where the first 19 trials are considered ``control" followed by 40 trials in which the activity of the somatosensory region was silenced by injection of CNO, thus activating DREADDS. 
Each trial lasts 12 seconds, during which activity in 121 neurons is measured for 119 time frames. 
Thus, the data can be seen as 3-dimensional, measuring a vector of neurons at each time frame within each trial.
The data is visualized as 2D slices for several neurons, time frames and trials in Fig.~\ref{fig:data3d_slices}. 
The time frame (1-119) within the trial is a local time scale, and the trial index represents a global time scale (1-59). 

Along with neuron measurements, we also have binary data labeling an event for each time and trial (Fig.~\ref{fig:label}). 
The labeling is performed using a modified version of the machine learning based JAABA software, annotating discrete behavioral events~\cite{Kabra2013}.
There are 11 labeled events that provide additional prior information helpful in verifying our analysis. 
An auditory cue (``tone'' event) is activated after 4 seconds (frames 40-42) and the food pellet comes to position at 4.4 seconds (frames 44-46).
The ``tone" event is typically followed by either a successful ``grab'' event and ``at mouth'' event, which lasts until the end of the trial, or by a several failed ``grab'' events and then labeled as a ``miss" event, i.e., the subject failed to grab the food pellet and bring it to its mouth.

\begin{figure}[t!]
\centering{\includegraphics[width=0.95\linewidth]{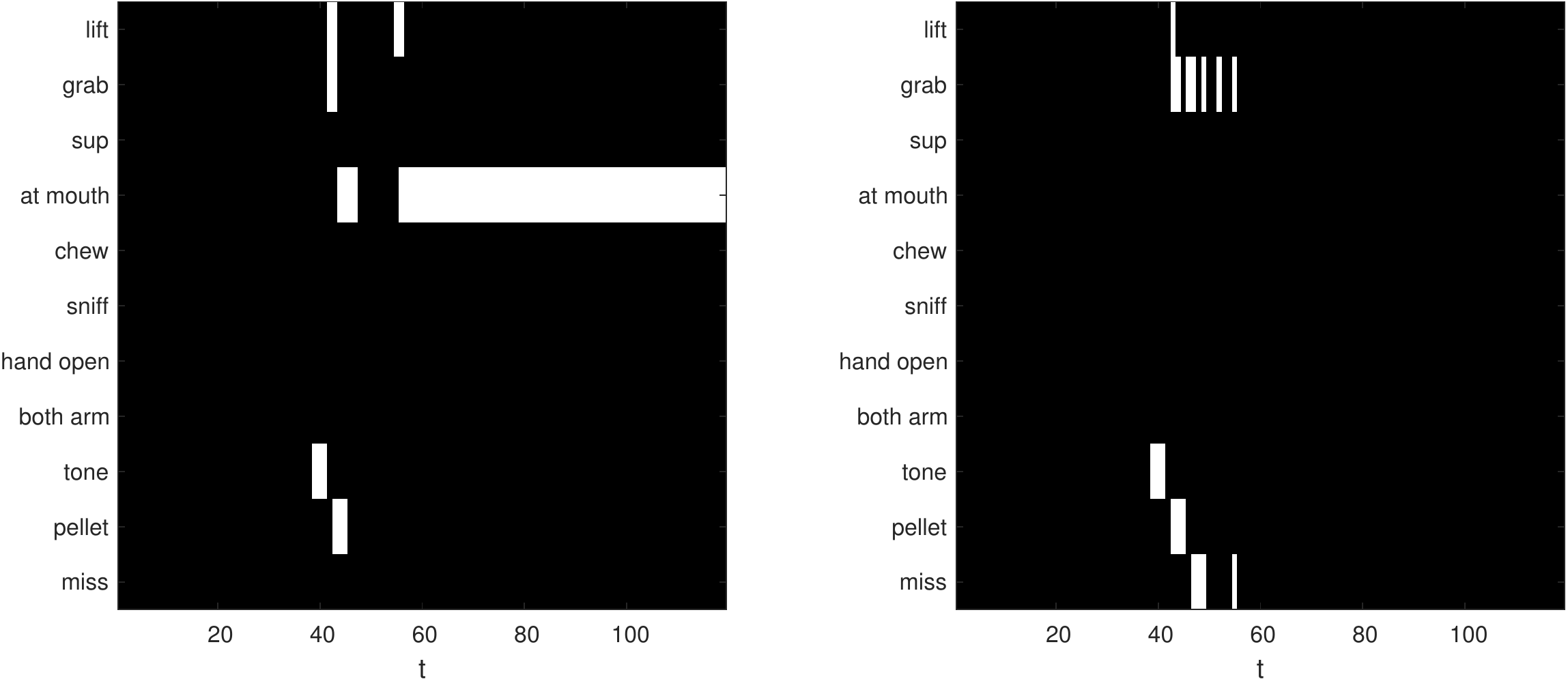}}
\caption{Binary event labels for two trials. (left) Successful trial in which the subject grabs and eats the food pellet. (right) Failure in which the subject makes several failed attempts to grab the food.} 
\label{fig:label}
\end{figure}

The control data consists of 19 trials, 11 of which were successful, i.e., the mouse managed to grab and eat the food pellet.
After 19 trials, CNO was injected IP to silence the sensory cortex (S1), which sends feedback information to M1. 
The next 40 trials, referred to as ``silencing trials'' included 10 successful trials. 
During these trials, the behavior of the mouse changes, demonstrated by a decrease in ``at mouth" (chewing) events and an increase in ``miss" events (in which the mouse does not manage to grab the food). 
Note that not all silencing trials are ``miss" and not all control trials are successful.

\subsection{Tri-geometric analysis}
The activity of the neurons is such that they are correlated at certain times, but completely unrelated at others, and certain neurons are sensitive to the auditory trigger, whereas others completely disregard it.
The goal is to automatically extract co-active communities of neurons, as they relate to the activity of the mouse.
We first analyze all 59 trials together, using Algorithm~\ref{alg:3d}. 
For the weights~(\ref{eq:omega2d}) used in the mutilscale metric~(\ref{eq:2demd}), we choose $\beta_r=1,\beta_t=1,\beta_T=0$. 
We describe in the following how the analysis is used to derive meaningful results for each dimension.

\begin{figure*}[th]
\centering{\includegraphics[width=0.9\linewidth]{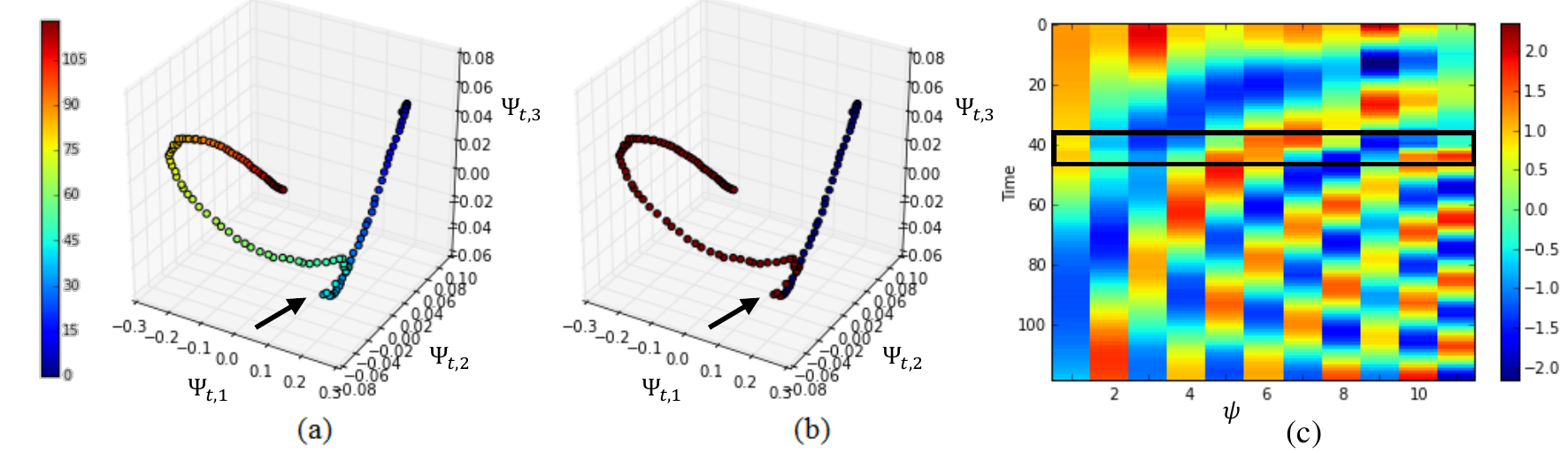}}
\caption{Embedding of time frames. (a-b) 3-dimensional embedding of all the 2D time frame slices (as in Fig.~\ref{fig:data3d_slices}(center), constructed by our tri-geometry analysis, where each time sample ($t\in\{1,...,119\}$) is a 3D point. 
In (a) the points are colored by the time frame index, and in (b) they are colored according to pre-tone frames (blue) and post-tone frames (red).
The tone, played at sample t=42 (marked by an arrow), is distinctively recovered from the data.
(c) First 11 eigenvectors of time embedding. Each column is an eigenvector $\psi_{t,\ell}\in\mathbb{R}^{119} \; \ell \in \{1,...,11\}$.
In general, the eigenvectors take the form of harmonic functions at different scales. 
Time $t=42$ (the tone) is apparent (marked by the box). 
Some eigenvectors correspond to harmonic functions over the entire trial (e.g., $\psi_{t,1}$), while some are localized in the pre-tone region (e.g., $\psi_{t,9}$), and some in the post-tone region (e.g., $\psi_{t,11}$).
}
\label{fig:time_embed}
\end{figure*}
Figure~\ref{fig:time_embed} presents the 3D embedding of the time frames, where each 3D point is colored by the time index $t \in \{1,...,119\}$ (a).
The embedding clearly organizes the time frames through the various repetitive experiments into two dominant clusters: ``pre-tone'' and ``post-tone'' frames (Fig.~\ref{fig:time_embed}(b)), where the tone signifies the cue for the animal to begin the hand reach movement. 
We emphasize that this prior information was not used in the data-driven analysis. 
The embedding in effect isolates the time where the auditory tone is activated for the subject to reach for food.

Figure~\ref{fig:time_embed}(c) presents the first eleven non-trivial eigenvectors $\{\psi_d(t)\}_{d=1}^{11}$ obtained by the decomposition of the affinity matrix on the time dimension.
Some eigenvectors correspond to harmonic functions over the entire interval $t \in [1,119]$.
However, some are localized either on the pre-tone region (e.g., $\psi_{t,9}$) or on the post-tone region (e.g., $\psi_{t,8}$ and $\psi_{t,11}$).
In addition, each eigenvector captures the time at varying scales.
This result demonstrates the power of our analysis; it shows that in a completely data-driven manner, a Fourier-like (harmonic) basis is attained. 
However, in contrast to the ``generic" Fourier-basis, which is fixed, the obtained basis is data adaptive and captures and characterized true hidden phenomena related to external stimuli (the tone) and to different patterns of behavior (before and after the tone). 

Thus, the embedding provides a verification of the knowledge we have regarding the time dimension in terms of regions of interest, and enables to pinpoint specific times of interest, essentially capturing the ``script'' of the trial.
We do not present the local decomposition of the time frames via the flexible tree since it is not of interest, as this dimension is smooth, and therefore is just decomposed into local temporal neighborhoods.

We next examine the analysis of the trial dimension, i.e., organization of the global time.
\begin{figure}[th]
\centering{\includegraphics[width=0.95\linewidth]{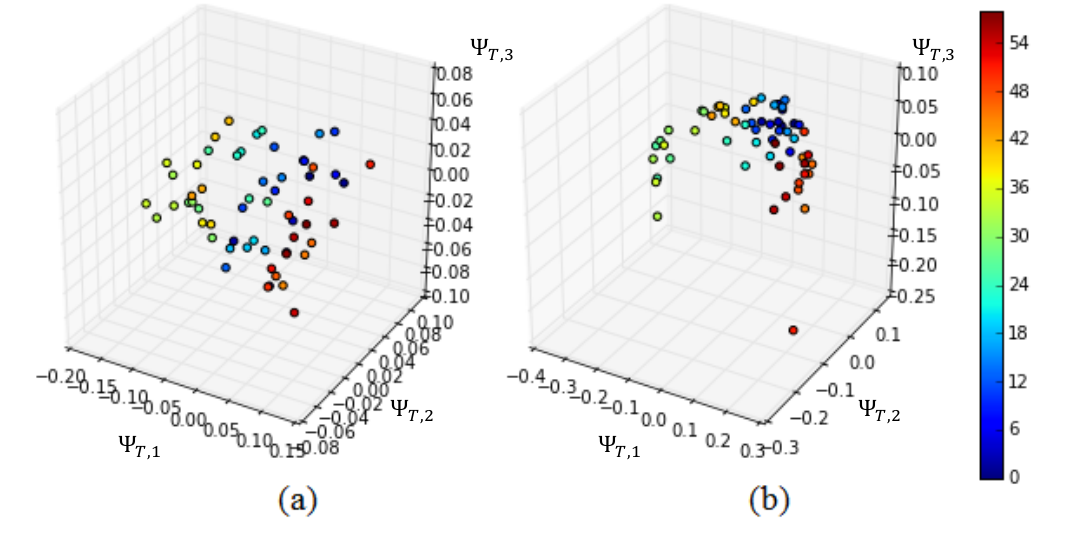}}
\caption{The 3D embedding of the 2D trial slices (Fig.~\ref{fig:data3d_slices}(left)) of all the trials $T\in\{1,...,59\}$. 
Each trial slice is represented by a single 3D point, colored by the trial index (here as well, the trial index was not taken into account in the analysis).
(a) Initial trial embedding based on cosine affinity. (b) Trial embedding derived from bi-tree multiscale metric. Trials are clustered in three main groups, where red and blue clusters are closer together.
}
\label{fig:trial_embed}
\end{figure}
In Fig.~\ref{fig:trial_embed}, we compare the embedding of the trials, obtained from the initial cosine affinity (a), and from the bi-tree multiscale metric (b). 
The points are colored by the trial index where blue corresponds to control trials (1-19), green-orange trials corresponds to the first silencing trials (19-44), and red corresponds to last silencing trials (45-59).
Our tri-geometry analysis yields an embedding (Fig.~\ref{fig:trial_embed}(b)) in which the blue and red points, corresponding to the first and last trials, respectively, are grouped together. 
This clearly indicates the temporal effect of silencing the somatosensory cortex on the activity of motor cortex. 
This is a promising result since solely from the neuronal activity, the data is self-organized functionally according to the brain activity manipulation we performed without the need to provide this information during the analysis. 
This result leads us to hypothesize that our silencing manipulation has a lag, and also that it expires over the duration of the experiment. 
Our analysis recovers hidden biological cues and enables accurate indication of pathological dysfunction driven by neuronal activity evidence.

To highlight the contribution of our approach in the analysis of such data, we compare our embedding to the 3D diffusion maps obtained by the initial cosine affinity (Fig.~\ref{fig:trial_embed}(a)), which does not exhibit any particular organization.
Thus, the refinement via iterative application of the algorithm is essential.
The multiscale local organization via the trees and coupling of the dimensions via the metric contribute to deriving a meaningful global embedding.

\begin{figure}[t]
\centering{\includegraphics[width=0.80\linewidth]{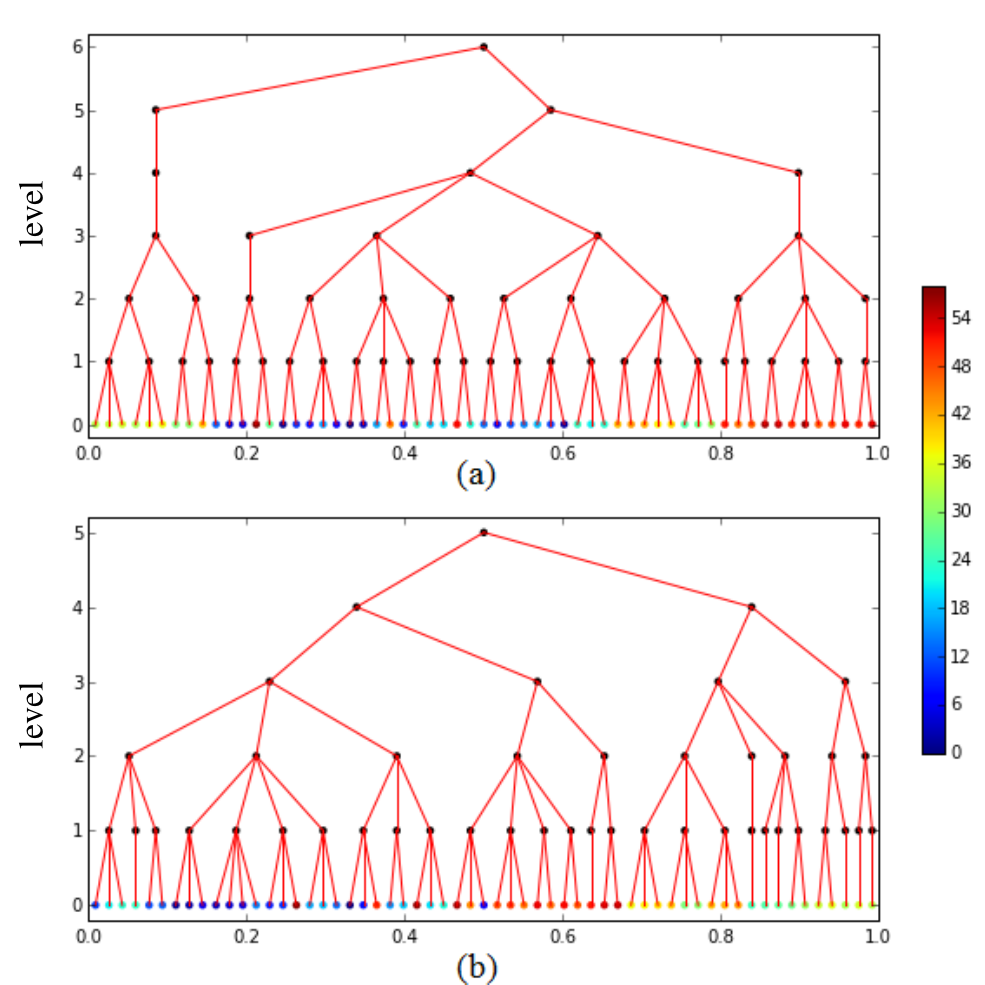}}
\caption{Flexible tree of trials ($T\in\{1,...,59\}$). The leaves are colored by trial index. 
(a) Tree corresponding to initial trial embedding in Fig.~\ref{fig:trial_embed}(a). (b) Tree corresponding to bi-tree multiscale metric embedding in Fig.~\ref{fig:trial_embed}(b). This tree better captures the nature of the trials, separating the pathological dysfunction caused by the silencing from the normal trials. }
\label{fig:trial_tree}
\end{figure}
\begin{figure*}[t!]
\centering{\includegraphics[width=0.9\linewidth]{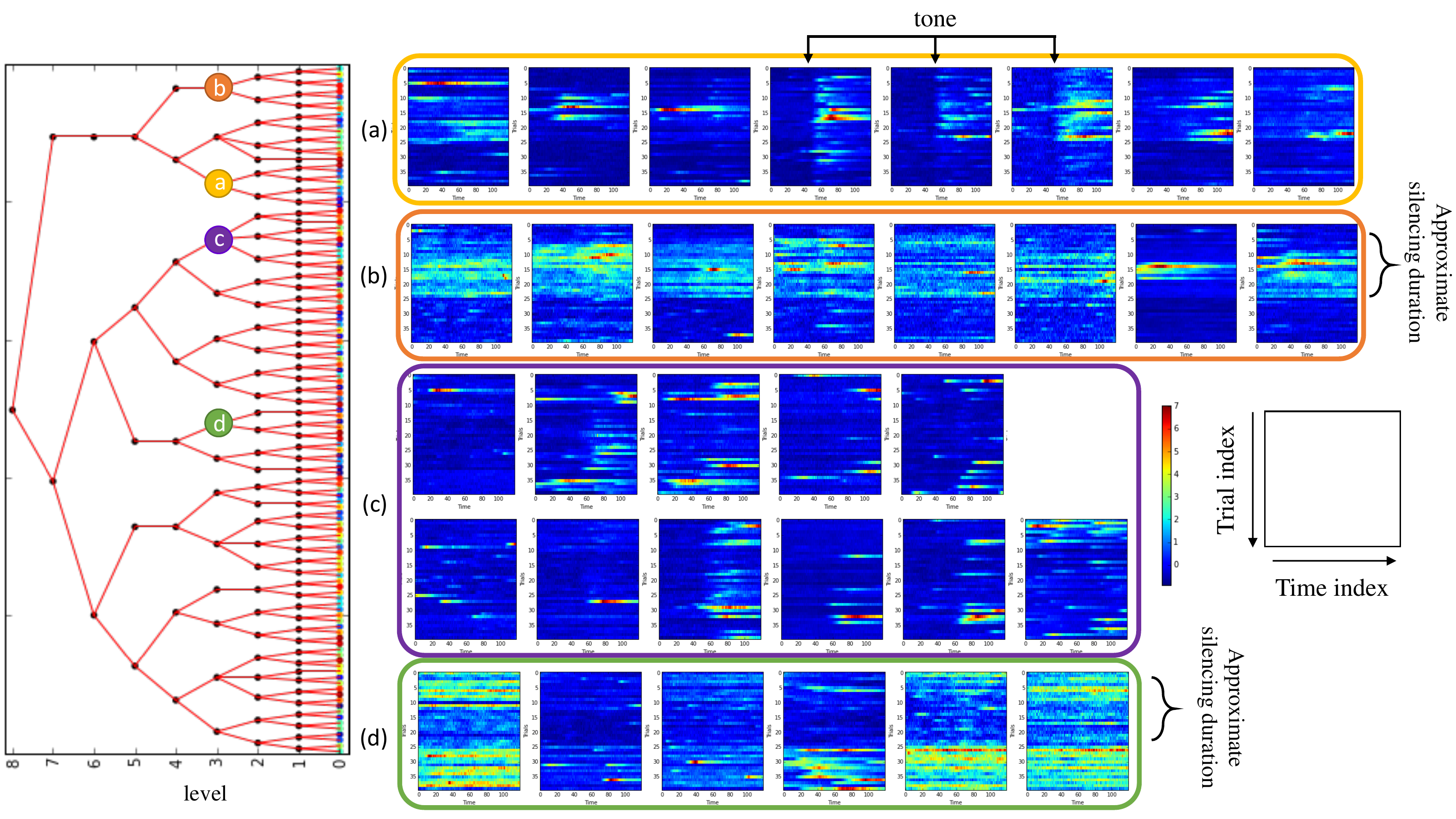}}
\caption{Neuron tree for the silencing trials for iteration $n=2$.
To demonstrate the organization obtained by the tree, we highlight several interesting tree folders from level $l=3$, marked with different colors and letters. 
Neurons belonging to the highlighted folders are grouped together, with a colored border corresponding to the folder color.
Each neuron has been reorganized as a 2D matrix of $n_T \times n_t$ ($40 \times 119$).
The neurons are grouped together according to similar properties. 
(a) Yellow folder: 8 neurons that are active only at or after the tone (vertical separation), and mostly in trials under the effect of the silencing (horizontal separation).
First three are associated with the tone itself, 5 right are associated with post-tone activity. 
(b) Orange folder: 8 neurons that were dominant mostly in trials under the effect of the silencing (horizontal separation), but are not sensitive to tone.
(This node is joined with the yellow node at level $l=5$).
(c) Purple folder: 11 neurons that were mostly active during trials not under the effect of the silencing, 8 of which are active after the tone (vertical separation). 
(d) Green folder: 5 neurons that were silenced by the manipulation (horizontal separation). 
}
\label{fig:sensors}
\end{figure*}
The improved clustering of the trials achieved by the bi-tree multiscale metric is also apparent when examining the flexible trees obtained from the two embeddings (Fig.~\ref{fig:trial_tree}).
The leaves are colored by the trial index as in the embedding.
The tree obtained from the new embedding better separates the trials in which the pathological dysfunction caused by the silencing is evident from the normal trials.
Remembering that flexible trees are constructed bottom-up using the embedding coordinates, this validates the claim that proximity in the embedding space captures the global temporal trend in the data.

To analyze the neurons, we split the data into two parts and analyze each separately, as this enables us to discover both behavioral patterns and pathological dysfunction.
First, we examine the 40 trials composing the silencing trials.
The neurons were preprocessed by subtracting the mean of each neuron over all trials, and normalizing it by its standard deviation across all trials.
This enables us to examine the increase and decrease of activity in the neuron without being sensitive to the intensity of the measurements.

Figure~\ref{fig:sensors} presents the multiscale hierarchical organization of the 2D slices of all the neurons in a flexible tree $\mathcal{T}_r^{(2)}$, obtained after two iterations of our analysis, highlighting several interesting tree folders from level $l=3$.
Neurons composing four folders from this level are presented.
The folders are marked in different colors on the tree and the neurons belonging to each folder are grouped together, with a border in corresponding color to the folder.
Each neuron has been reorganized as a 2D matrix of size $n_T \times n_t$ ($40 \times 119$).
The neurons are grouped together according to similar properties and the displayed folders clearly relate to pathological dysfunction.
For example, the orange folder consists of neurons that are active only during trials under the effect of somatosensory silencing (horizontal separation). 
The yellow folder consists of neurons that are active only at or after the tone (vertical separation), and mostly in trials under the effect of the silencing (horizontal separation). 
In contract, the purple folder consists of neurons, which are active after the tone but during trials without the silencing effect. 
Finally, the green folder consists of neurons, which were silenced by the manipulation. 
This leads us to hypothesize, as with the trial analysis, that the effect of the somatosensory silencing has a slight delay, and in addition that it wears off after a certain number of trials, since the experiment was very long.
Our analysis groups  neurons demonstrating the same activity patterns together in an automatic data-driven manner without manual intervention. 

The silencing trials enable us to analyze the neurons in terms of how they are affected by the introduced virus.
We now treat the 19 control trials, which allows us to analyze the behavioral aspect of the neurons without external intervention.
In Fig.~\ref{fig:sensors_control}, we display the neuron tree, $\mathcal{T}_r^{(1)}$, obtained after one iteration of our analysis, and examine folders for levels $l=2,3,4$.
Neurons composing five folders from this level are presented.
The folders are marked in different colors on the tree and the neurons belonging to each folder are grouped together, with a border in corresponding color to the folder.
Each neuron has been reorganized as a 2D matrix of size $n_T \times n_t$ ($19 \times 119$).
We use the labeled ``at mouth'' event and the prior information on the time of the auditory tone to analyze the results.
The binary labels indicating ``at mouth'' activity has also been reordered as a 2D matrix of size $n_T \times n_t$ ($19 \times 119$), and is displayed in within the black border.

The results indicate that neurons are grouped by similarity, clearly related to the behavioral data.
The upper two folders (red and orange) show increased activity before and during the auditory tone. 
The next three folders show increased activity after the tone.
The yellow folder composes of neurons that are activity during different trials, regardless of ``at mouth'' activity.
The purple folder, on the other hand, contains neurons that are active post-tone, almost entirely during which were successful, i.e., the subject managed to eat the food pellet, indicated by continuous ``at mouth'' labeling till the end of the trial.
Finally, the green folder is composed of two neurons with the opposite activity.
They are most active post-tone during trials in which the subject failed to eat the food pellet.
Note that this analysis is data-driven, i.e., no prior information on the event labels is used in grouping the neurons.
\begin{figure*}[th]
\centering{\includegraphics[width=0.9\linewidth]{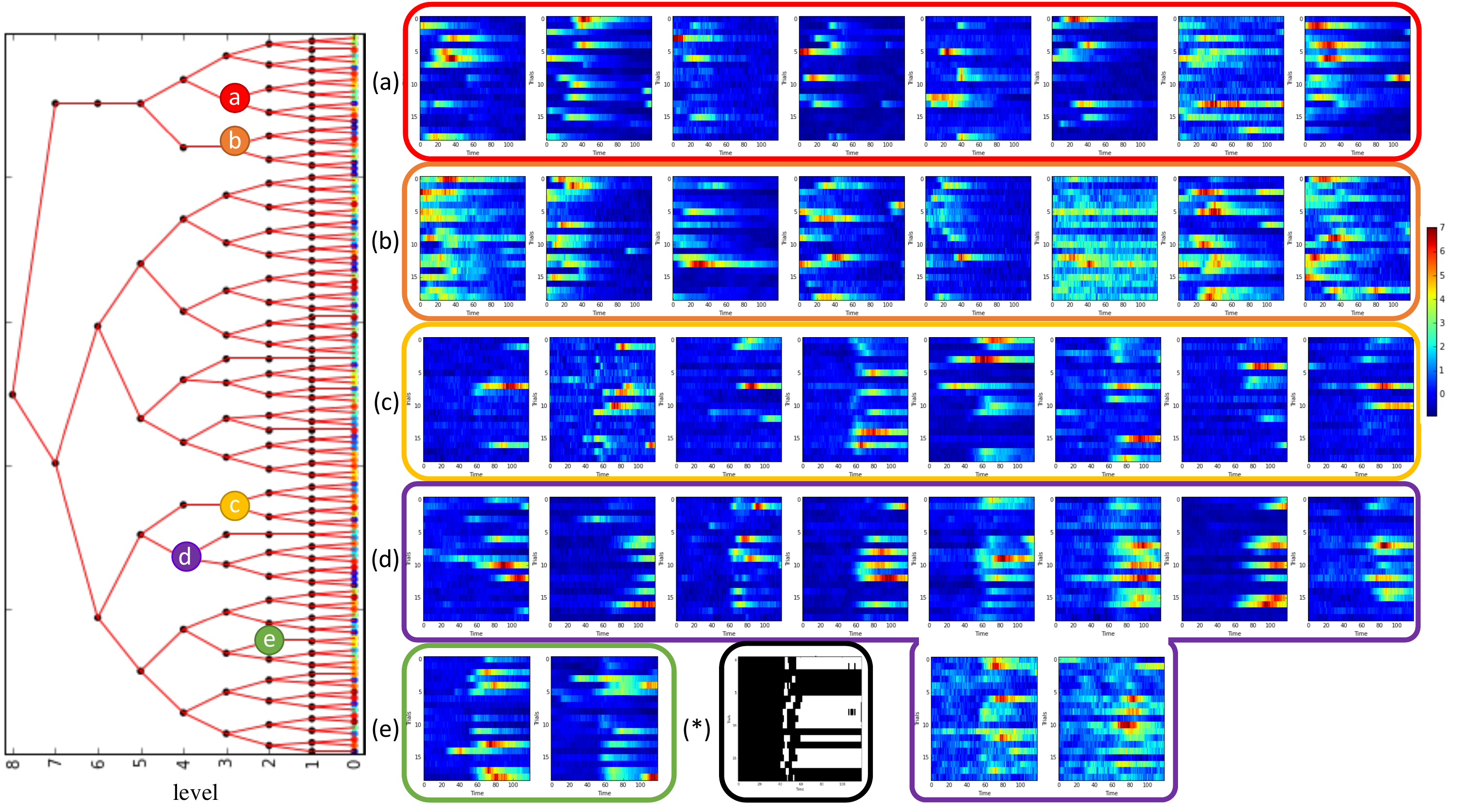}}
\caption{Neuron tree for control trials for iteration $n=1$. We highlight several interesting tree folders from level $l=2-4$, marked with different colors and letters.
Each neuron has been reorganized as a 2D matrix of $n_T \times n_t$ ($19 \times 119$).
Neurons belonging to the highlighted folders are grouped together, with a colored border corresponding to the folder color.
(a-b) Red folder (8 neurons) and orange folder (8 neurons) in level $l=3$ are active before the tone. (Note these nodes are joined at level $l=5$).
(c) Yellow folder: 8 neurons that are active post-tone.
(d) Purple folder ($l=4$): 10 neurons that are active post-tone only during trials which were labeled as ``at mouth". 
(e) Green folder: 2 neurons that with significant activity post-tone only during trials which were labeled as ``miss". 
(*) Black border contain binary labeling of at mouth event, ordered as $T \times t$ matrix.
}
\label{fig:sensors_control}
\end{figure*}

In analysis of the neurons, the main contribution is the produced partition tree.
The global embedding did not yield meaningful results, and the examination of local folders in the tree was most informative.
Note that we are looking at a limited set of ``sensors'' since the neurons were manually grouped together into ROIs.
In future work we intend to analyze a larger group of sensors, by examining all pixels acquired from the 2-photon imaging video separately.
We know from previous work that increasing the number of sensors is typically beneficial to the iterative analysis.
This will remove any introduced biases yielded by the pre-processing and enable to identify spatial structures not limited to ellipses.

Our experimental results demonstrate that our approach identifies for the first time (to the best of our knowledge), solely from observations and in a purely data-driven manner:
(i) functional subsets of neurons, (ii) activity patterns associated with particular behaviors, and (iii) the dynamics and variability in these subsets and patterns as they are related to context and to different time scales (from the variability within a trial, to a global trend in trials, induced by the silencing method. 
In analyzing the intra-trial time dimension, we pinpoint the time of the auditory trigger, and separate the time frames into multiscale local regions, before and after the trigger.
Finally, in organizing the trials, we are able to both separate the trials to ``success'' and ``failure'' cases, and to determine a global trend that relates to an introduced external intervention.
Thus, these methods lay a solid foundation for modeling the sensory-motor system by providing sufficiently fine structures and accurate view of the data to test our hypotheses, within an integrated computational theory of sensory-motor perception and action.

We note that conventional manifold learning tools did not yield any intelligent data organization for this case.
Thus, organizing the neurons or the time samples separately by a 1D geometry using conventional manifold learning methods is inappropriate for this complex data.
The fact, demonstrated here, that the neuronal activity of different types of neurons is correlated only during specific times, and might be random otherwise, verifies the need for coupled organization analysis which simultaneously organizes time, trials and neurons into tri-geometries.

\section{Conclusions}
In this paper we presented a new data-driven methodology for the analysis of trial-based data, specifically trials of neuronal measurements.
Our approach relies on an iterative local to global refinement procedure, which organizes the data in coupled hierarchical structures and yields a global embedding in each dimension.
Our analysis enabled extracting hidden biological cues and accurate indication of pathological dysfunction extracted solely from the measurements.
We identified neuronal activity patterns and variability in these patterns related to external triggers and behavioral events, at different time scales, from recovering the local ``script'' of the trial, to a global trend across trials.
In this paper we focused on neuronal measurements, but our approach is general and can be applied to other types of trial-based experimental data, and even to general high-dimensional datasets such as video, temporal hyperspectral measurements, and more.

In future work, we intend to analyze the neuronal measurements from the two-photon imaging without clustering them into ROIs. 
This significantly increases the number of ``sensors'' and should enable to learn complex spatial structures in the cortex.
Furthermore, our analysis can be extended to higher dimensions, e.g., incorporating behavioral data as a fourth dimension in the neuronal measurements.

\bibliographystyle{IEEEtran}
\bibliography{mybib}

\vfill\pagebreak
\end{document}